\documentclass{article}
\usepackage{theorem,graphicx,amssymb,amsmath}
\usepackage[retainorgcmds]{IEEEtrantools}
\usepackage{lineno}

\theorembodyfont{\slshape}

\newtheorem{theorem}{Theorem}
\newtheorem{lemma}[theorem]{Lemma}

\newtheorem{problem}{Problem}

\def\QED{\ensuremath{{\square}}}

\def\markatright#1{\leavevmode\unskip\nobreak\quad\hspace*{\fill}{#1}}
\newenvironment{proof}
  {\begin{trivlist}\item[\hskip\labelsep{\bf Proof.}]}
  {\markatright{\QED}\end{trivlist}}

\newcommand{\crs}{\overline{\operatorname{cr}}}

\usepackage{wasysym}
\title{Updating the Number of Crossings in Rectilinear Drawings of the Complete Graph\footnote{This work was partially supported
by Conacyt of Mexico grant 253261.}}

\author{Frank Duque\thanks{Departamento de Matem\'aticas, CINVESTAV.} \thanks{\tt{frduque@math.cinvestav.mx}} 
\and Ruy Fabila-Monroy\footnotemark[2] \thanks{\tt{ruyfabila@math.cinvestav.edu.mx}}}

\begin{document}
\date{\today}
\maketitle

\begin{abstract}
Let $S$ be a set of $n$ points in general position in the plane. 
Join every pair of points in $S$ with a straight line segment. 
Let $\overline{cr}(S)$ be number of pairs of these edges that intersect in their interior.
Suppose that this number is known.
In this paper we consider the problem of computing  $\overline{cr}(S')$,
where $S'$ comes from adding, deleting or moving  a point from $S$.
 \end{abstract}

\section{Introduction}

A set of points in the plane is in \emph{general position} if no three of its points are collinear.
Let $G$ be a graph on $n$ vertices. A \emph{rectilinear drawing} of  $G$
is a drawing of $G$  in the plane, in which its vertices are placed at points in general position, and its edges are drawn as straight line segments joining these points.

A pair of straight line segments in the plane \emph{crosses} if their intersection is a single point different from their endpoints.
The \emph{number of crossings} of a rectilinear drawing 
is the number of pairs of its edges that cross.
The \emph{rectilinear crossing number} of $G$ is the minimum number 
of crossings over all rectilinear drawings of $G$; we denote it by $\crs(G)$.

In the case when $G$ is a complete graph, 
the number of crossings in a rectilinear drawing of $G$, depends
only on the position of its vertices. Let $S$ be a set of $n$ points in general position in the plane, and 
let $\crs(S)$ be the number of crossings in a rectilinear drawing
of the complete graph $K_n$ with $S$ as its vertex set. We abuse notation
and refer to $\crs(S)$ as the number of crossings of $S$. Note that 
\[\crs(K_n)=\min \{\crs(S): S \textrm{ is a set of } n \textrm{ points in general position in the plane}\}.\]
Since this value only depends on $n$, for brevity we refer to $\crs(K_n)$ as $\crs(n)$.
The current best bounds on $\crs(n)$ are 
\[ 0.379972 \binom{n}{4} < \crs(n)  < 0.380473\binom{n}{4}+\Theta(n^3).\]
The lower bound was given by \'Abrego Fern\'andez-Merchant, 
Lea\~nos and Salazar \cite{lower};  
the upper bound  was given by Fabila-Monroy and L\'opez~\cite{crossing}. 
The computation of $\crs(n)$ is an active research problem in Combinatorial Geometry. 
For a recent survey regarding the rectilinear crossing number of $K_n$ see \cite{survey}.

 A successful approach for upper bounding $\crs(n)$ has been to
provide constructions that take a point set with few crossings and produce
a larger point set also with few crossings. The new set is then fed again to the construction to produce
an even larger set with few crossings. Applying this process iteratively, arbitrarily large 
point sets with few crossings can be found.
This approach has been refined over the years~\cite{singer,towards,oswinupper,bersil,upper}.
With the current best such construction being that of \'Abrego, Cetina,
Fern\'andez-Merchant, Lea\~nos, and Salazar \cite{upper}. 

A salient feature of this approach is that to find a new general upper bound for $\crs(n)$
it is sufficient to find a small point set with sufficiently few crossings.
In~\cite{crossing}, a simple heuristic was used for this purpose. They improved many of the best known
sets of $27,\dots, 100$ points with few crossings\footnote{These point sets were obtained from 
Oswin Aichholzer's page http://www.ist.tugraz.at/aichholzer/research/rp/triangulations/crossing/}.
In particular, they found a  set of $75$ points with $450492$ crossings. This
point set together with the construction of~\cite{upper} provide the current best upper bound
on $\crs(n)$.

The heuristic used in~\cite{crossing} is as follows.
Choose a random point $p$ of $S$, and a random point $q$ near $p$. 
Afterwards, compute $\crs (S\setminus \{p\}\cup \{q\})$. 
If this number is less or equal to $\crs(S)$ then replace $p$ with $q$ in $S$. 
The improvements obtained in~\cite{crossing} 
were done by many iterations of this procedure. Experimentally,
it seems that heuristics of these type work well in practice; recently in \cite{pseudolinear},
Balko and Kyn\v{c}l used simulated annealing to improve the best
upper bound on $K_n$ on a parameter similar to the rectilinear crossing number;
this parameter is  called the pseudolinear crossing
number.

In the heuristic of \cite{crossing}, at most one point of $S$ changes position at each step.
We have observed experimentally that
removing or adding a point from a point set with few  crossings tends to produce a point set
with few crossings.  
Thus, it is sensible to consider the following problem.
\begin{problem}\label{prob:main}
 Suppose that $S'$ is obtained from $S$ by either moving, removing or adding a point.
 Assuming that $\crs(S)$ is known: What is the time complexity of computing $\crs(S')$?
\end{problem}

Since $\crs (S)$ can be computed in $O(n^2)$ time, perhaps 
Problem~\ref{prob:main} can be solved in $o(n^2)$ time. Although we are
currently unable to solve Problem~\ref{prob:main} in subquadratic time,
we are able to prove the following amortized results on this problem.

\begin{theorem} \label{thm:move}
 Let $S$ be a set of $n$ points in the general position in the plane;
 let $C$ be a set of $\Theta(n)$ ``candidate'' points,
 such that $S \cup C$ is in general position and $C\cap S = \emptyset$.
 Let $p$ be a point in $S$.
 Then the set of values
 \[\left \{\crs(S'):S'=S\setminus \{p\} \cup \{q\}, q \in C \right \} \]
 can be computed in $O(n^2)$ time.
\end{theorem}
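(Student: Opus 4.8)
The plan is to fix the set $T := S\setminus\{p\}$ (so $S\setminus\{p\}\cup\{q\}=T\cup\{q\}$ for every $q\in C$), spend $O(n^2)$ time precomputing angular information about $T\cup C$ together with two ``counting'' statistics on $T$, and then recover each of the $|C|=\Theta(n)$ output values in $O(n)$ additional time. Write $m:=|T|=n-1$. Splitting the crossings of $T\cup\{q\}$ into those internal to $T$ and those using an edge at $q$ gives $\crs(T\cup\{q\})=\crs(T)+h(q)$, where $h(q)$ counts the crossing pairs of edges that use an edge incident to $q$. Two edges at $q$ meet at $q$ and do not cross, so $h(q)$ is the number of pairs $(\overline{qa},\overline{bc})$ with $a,b,c\in T$ whose segments cross. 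Grouping these by the $4$-point set $\{q,a,b,c\}$: a set in convex position contributes exactly one crossing pair, and exactly one of its two crossing diagonals is incident to $q$; a set not in convex position contributes none. Hence
\[
h(q)=\binom{m}{3}-t_T(q)-B(q),
\]
where $t_T(q)$ is the number of triples of $T$ whose triangle contains $q$, and $B(q)=\sum_{a\in T}Y(q,a)$, with $Y(q,a)$ the number of edges $\overline{bc}$ of $T\setminus\{a\}$ such that $a$ lies in the interior of $\triangle qbc$. The point used repeatedly below is that $a$ lies inside $\triangle qbc$ if and only if $q$, $b$, $c$ do not all lie in a common open half-plane bounded by a line through $a$; in particular $Y(q,a)$ depends on $q$ only through the direction from $a$ to $q$, not through the location of $q$.

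For the preprocessing I first compute, for $T\cup C$, the angular order of all the other points around each point. It is classical that for an $N$-point set all of these orders can be obtained in total $O(N^2)$ time, e.g.\ by building the dual line arrangement and reading off, along each dual line, the order of its crossings with the others; here this costs $O(n^2)$. From these orders I also get $\crs(T)$ in $O(n^2)$ time: a non-convex $4$-subset of $T$ has a unique point inside the triangle of the other three, so $\crs(T)=\binom{m}{4}-\sum_{v\in T}t_{T\setminus\{v\}}(v)$, and each $t_{T\setminus\{v\}}(v)$ is computed from the angular order of $T$ around $v$ by one rotating sweep, using that a triple fails to enclose $v$ exactly when it lies in an open half-plane through $v$, which happens, for each such triple, at its unique clockwise-most point; this gives $t_{T\setminus\{v\}}(v)=\binom{m-1}{3}-\sum_{t}\binom{d_t}{2}$, where $d_t$ is the number of points of $T\setminus\{v\}$ to the left of the directed line from $v$ through $t$. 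The identical computation, run instead on the angular order of $T$ around a candidate $q$, yields $t_T(q)=\binom{m}{3}-\sum_{t\in T}\binom{d_t}{2}$ in $O(n)$ time per $q\in C$, hence $O(n^2)$ in total.

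It remains to compute all the values $B(q)$, which I do by processing each $a\in T$ in $O(n)$ time. The $m-1$ directions from $a$ to the points of $T\setminus\{a\}$, together with their $m-1$ opposite directions, cut the circle of directions at $a$ into $O(n)$ arcs, and by the observation above $Y(q,a)$ takes a single value on all $q$ whose direction from $a$ falls in a fixed such arc. Writing $Y(q,a)=\binom{m-1}{2}-\overline{Y}(q,a)$, where $\overline{Y}(q,a)$ counts the pairs $\{b,c\}\subseteq T\setminus\{a\}$ for which $\{q,b,c\}$ \emph{does} lie in an open half-plane through $a$, one expresses $\overline{Y}$ through sums of the form $\sum\binom{\delta}{2}$ that count triples confined to open half-planes, and whose summands change by a controlled amount as the direction passes from one arc to the next; so a single rotating sweep around $a$, based on the precomputed angular order of $T\setminus\{a\}$ interleaved with the opposite directions, fills in the value of $Y(\cdot,a)$ for every arc in $O(n)$ time. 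Then, merging the precomputed angular order of $T\cup C$ around $a$ with those opposite directions by one monotone walk around the circle locates each $q\in C$ in its arc, in $O(n)$ total time, so $Y(q,a)$ can be added to an accumulator $B[q]$ in $O(1)$ per candidate. Summing over $a\in T$ produces every $B(q)$ in $O(n^2)$ time, after which $\crs(S\setminus\{p\}\cup\{q\})=\crs(T)+\binom{m}{3}-t_T(q)-B(q)$ is reported in $O(1)$ for each $q\in C$.

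I expect the delicate part to be the computation of the $B(q)$. One must verify that $Y(q,a)$ really is a function of the direction from $a$ to $q$ alone — this holds for $Y(q,a)$, but fails for the more obvious quantity ``number of edges of $T$ crossed by the segment $\overline{qa}$'', which also depends on the length of $\overline{qa}$, and this is precisely why the argument routes through $t_T$ and $B$ instead of summing that quantity directly — and then one must arrange the rotating sweep producing the per-arc values of $Y(\cdot,a)$ and the location of the candidates within those arcs so that, for each fixed $a$, everything stays within $O(n)$ time. The reduction itself, the computations of $t_T(q)$ and of $\crs(T)$, and the $O(n^2)$-time computation of all angular orders are, by contrast, routine.
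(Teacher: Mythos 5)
Your argument is correct, but it takes a genuinely different route from the paper's. The paper disposes of this theorem in one line, as an immediate corollary of Theorem~\ref{thm:add} applied with $S\setminus\{p\}$ as the starting set and $C$ as the candidates; Theorem~\ref{thm:add} is in turn proved algebraically, via the $\lambda$-matrix identity of Lemma~\ref{lem:FormulaCrossingNumber}, by writing the change in $\sum_{u,v}\binom{\lambda(u,v)}{2}$ caused by inserting $q$ as a weighted half-plane sum $\sum_{r\in S_q(u)}\lambda_S(u,r)$ and evaluating all such sums with the rotational sweep of Lemma~\ref{lem:ComputingLemma}. You instead decompose directly at the level of four-point subsets: $\crs(T\cup\{q\})-\crs(T)$ is the number of convex quadrilaterals through $q$, which you obtain by subtracting from $\binom{m}{3}$ the non-convex four-sets whose unique interior point is $q$ (your $t_T(q)$) and those whose interior point is some $a\in T$ (your $B(q)$), and all three counts are again evaluated by rotating sweeps; your key observation that $Y(q,a)$ depends on $q$ only through the direction from $a$ to $q$ is correct and is exactly what makes the per-arc tabulation of $Y(\cdot,a)$ legitimate, and the per-event updates you describe are indeed $O(1)$, so the total is $O(n^2)$. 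Both proofs therefore bottom out in the same primitive (angular orders around every point plus constant-time half-plane updates), but the paper's route buys uniformity --- the same $\nabla_p$ bookkeeping handles deletion, insertion, and their composition --- and an essentially trivial verification of this particular theorem, whereas yours buys a more elementary, purely geometric derivation that never invokes the normalization constant of Lemma~\ref{lem:FormulaCrossingNumber}, at the price of the more delicate sweep for $B(q)$.
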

\begin{theorem}\label{thm:remove}
 Let $S$ be a set of $n$ points set in general position in the plane.
 Then the set of values
 \[\left \{\crs(S\setminus \{ p \}): p \in S \right \}\] can be computed in $O(n^2)$ time.
\end{theorem}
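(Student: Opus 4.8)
\medskip
\noindent\textbf{Proof plan.}
The plan is to reduce the problem to evaluating, for each $p\in S$, the number $\delta(p)$ of crossings in the straight‑line drawing of $K_n$ on $S$ that are incident to $p$. A crossing of $S$ survives the deletion of $p$ exactly when $p$ is not one of the four endpoints of the two crossing edges, and two crossing edges share no endpoint, so $\crs(S\setminus\{p\})=\crs(S)-\delta(p)$; equivalently $\delta(p)$ is the number of $4$‑point subsets of $S$ that contain $p$ and are in convex position (the crossing being the pair of diagonals). Since $\crs(S)$ can be computed once in $O(n^2)$ time, it suffices to produce all the values $\delta(p)$, $p\in S$, in $O(n^2)$ total time.

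I would then classify the $4$‑subsets $\{p,a,b,c\}$ of $S$: each is either in convex position, or has exactly one of its four points in the interior of the triangle spanned by the other three. Writing $T(p)$ for the number of triangles spanned by $S\setminus\{p\}$ whose interior contains $p$, and $M(p)=\sum_{\{a,b\}\subseteq S\setminus\{p\}}\bigl|\{c\in S:c\in\operatorname{int}(\triangle pab)\}\bigr|$ for the number of those $4$‑subsets in which the interior point is one of $a,b,c$, this yields the identity $\binom{n-1}{3}=\delta(p)+T(p)+M(p)$. It is well known — and is the standard route to the $O(n^2)$‑time computation of $\crs(S)$ — that the full ``$\lambda$‑matrix'', i.e.\ the number $\ell_q(r)$ of points of $S$ lying to the left of the directed line $\overrightarrow{qr}$ for every ordered pair $(q,r)$, together with the circular order of the remaining points around each point of $S$, can be produced in $O(n^2)$ total time; the quantities $T(p)=\binom{n-1}{3}-\sum_{r\neq p}\binom{\ell_p(r)}{2}$ then follow in $O(n)$ per point. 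So the whole difficulty lies in computing all the $M(p)$ within the same budget.

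For this last step I would express $M(p)$ through triangle counts in $S$ versus in $S\setminus\{p\}$. Letting $g(p,q)$ be the number of pairs $\{a,b\}$ with $q\in\operatorname{int}(\triangle pab)$, one has $M(p)=\sum_{q\neq p}g(p,q)$ and $g(p,q)=T_S(q)-T_{S\setminus\{p\}}(q)$, where $T_X(q)$ counts the triangles spanned by $X\setminus\{q\}$ that contain $q$. Deleting $p$ only decreases $\ell_q(r)$ by $1$, and only for those $r$ with $p$ to the left of $\overrightarrow{qr}$; substituting the $\lambda$‑matrix formula for $T_X(q)$ and collecting terms collapses everything to a closed form of the shape
\[
M(p)\;=\;(n-1)\binom{n-2}{2}\;-\;\sum_{q\neq p}\binom{\ell_q(p)}{2}\;-\;\Sigma(p),
\qquad
\Sigma(p)=\sum_{\substack{(q,r)\\ p\text{ left of }\overrightarrow{qr}}}\bigl(\ell_q(r)-1\bigr).
\]
The middle term is just a column sum of the $\lambda$‑matrix, hence $O(n)$ per $p$. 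The term $\Sigma(p)$ is the crux: I would compute it by fixing the ``centre'' $q$ and using the precomputed circular order of $S\setminus\{q\}$ around $q$ — a point $p$ lies left of $\overrightarrow{qr}$ precisely when the direction of $p$ from $q$ falls in the open half‑turn immediately counterclockwise of the direction of $r$ — so that, for fixed $q$, the inner sum $\sum_{r}(\ell_q(r)-1)$ runs over the $r$ in a half‑turn window ending just before $p$. Sweeping $p$ once around $q$, the endpoints of this window advance monotonically, so a two‑pointer pass produces the contributions of $q$ to all the $\Sigma(p)$ in $O(n)$ time; summing over all centres $q$ this is $O(n^2)$. Assembling $\delta(p)=\binom{n-1}{3}-T(p)-M(p)$ and then $\crs(S\setminus\{p\})=\crs(S)-\delta(p)$ costs $O(n)$ more, for $O(n^2)$ overall.

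The main obstacle — and the reason the statement is an amortized one rather than a per‑point one — is exactly the term $\Sigma(p)$: treated one $p$ at a time (re‑sorting the other points around $p$, then integrating) it costs $\Theta(n\log n)$ each, i.e.\ $\Theta(n^2\log n)$ total, the same bottleneck as sorting $n$ angular lists. The saving comes from organizing the computation around the centres $q$ rather than around the deleted point $p$, so that a single circular order per centre is reused for all $p$ at once; the remaining work is the routine but slightly fiddly bookkeeping that makes the constants and lower‑order terms in the closed form for $M(p)$ come out exactly.
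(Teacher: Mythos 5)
Your proof is correct, and it reaches the paper's algorithm by a different, more elementary and more geometric derivation. The paper never mentions crossings incident to $p$ or point-in-triangle counts: it works entirely inside the identity $\crs(S)=f_S(S,S)-n(n-1)(n-2)(n-3)/8$ of Lemma~\ref{lem:FormulaCrossingNumber}, writes the quantity to be updated as $\nabla_p=f_{S\setminus\{p\}}(S\setminus\{p\},S\setminus\{p\})-f_S(S,S)+f_S(\{p\},S)+f_S(S,\{p\})$, and observes that $\binom{\lambda_{S\setminus\{p\}}(q,r)}{2}-\binom{\lambda_S(q,r)}{2}$ equals $1-\lambda_S(q,r)$ when $p$ lies to the left of the directed line from $q$ to $r$ and $0$ otherwise. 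You instead decompose the $\binom{n-1}{3}$ four-point subsets through $p$ into convex ones ($\delta(p)$), those with $p$ interior ($T(p)$), and the rest ($M(p)$), and express $M(p)$ through triangle-containment counts via $g(p,q)=T_S(q)-T_{S\setminus\{p\}}(q)$; all of these identities, and the telescoping to your closed form for $M(p)$, check out. The two derivations converge on literally the same computational core: your $\Sigma(p)$ is, up to sign, exactly the paper's $\sum_{q}\sum_{r\in S_p(q)}w_q(r)$ with $w_q(r)=1-\lambda_S(q,r)$ (using that $p$ is left of the line $q\to r$ iff $r$ is left of the line $p\to q$), and both are evaluated by the same per-center angular sweep (the paper's Lemma~\ref{lem:ComputingLemma}, your two-pointer pass), giving $O(n)$ per center and $O(n^2)$ overall. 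What the paper's algebraic framing buys is uniformity: the identical $f_S$ bookkeeping with weights $\lambda_S(q,r)$ in place of $1-\lambda_S(q,r)$ yields Theorem~\ref{thm:add} and hence Theorem~\ref{thm:move} almost verbatim. What your framing buys is self-containedness and transparency: it needs no analogue of Lemma~\ref{lem:FormulaCrossingNumber}, and every intermediate quantity has a direct geometric meaning.
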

\begin{theorem}\label{thm:add}
Let $S$ be a set of $n$ points in the general position in the plane;
 let $C$ be a set of $\Theta(n)$ ``candidate'' points,
 such that $S \cup C$ is in general position and $C\cap S = \emptyset$.
 Then the set of values
 \[\{ \crs(S \cup \{q\}) :q \in C \}\] can be computed in $O(n^2)$ time.
\end{theorem}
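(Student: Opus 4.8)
The plan is to reduce everything to computing, for each candidate $q\in C$, the increment $\crs(S\cup\{q\})-\crs(S)$, and to batch all $\Theta(n)$ of these increments into a single $O(n^2)$ computation. Write $\nu(P)$ for the number of $4$-point subsets of $P$ in non-convex position; since a $4$-subset contributes one crossing when convex and none otherwise, $\crs(P)=\binom{|P|}{4}-\nu(P)$, and in a non-convex $4$-subset exactly one point lies in the interior of the triangle spanned by the other three. Hence
\[
\crs(S\cup\{q\})=\crs(S)+\binom{n}{3}-f(q)-h(q),
\]
where $f(q)$ is the number of triangles spanned by three points of $S$ containing $q$ in their interior, and $h(q)$ is the number of pairs $(p,\{a,b\})$ of distinct points of $S$ with $p$ in the interior of the triangle $qab$. (Here $\binom{n}{3}=\binom{n+1}{4}-\binom{n}{4}$, and $\nu(S\cup\{q\})-\nu(S)=f(q)+h(q)$ according to whether $q$ is the interior point or a triangle vertex of the new non-convex $4$-subset.) It therefore suffices to compute $\crs(S)$ together with the families $\{f(q):q\in C\}$ and $\{h(q):q\in C\}$, each within $O(n^2)$ time.

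I would begin by computing, for every point $x\in S\cup C$, the circular order of the remaining points of $S\cup C$ around $x$; all these radial orders can be produced in $O(n^2)$ total time by a standard procedure (e.g.\ via the dual line arrangement), which is the same subroutine underlying the $O(n^2)$-time evaluation of $\crs(S)$. Given these orders, $\crs(S)$ and each value $f(q)$ are obtained in $O(n)$ per base point by the classical counting argument: from the order of $S$ around $q$, one two-pointer sweep yields, for each $a\in S$, the number $A_a$ of points of $S\setminus\{a\}$ lying strictly to the left of the ray from $q$ through $a$, and then $f(q)=\binom{n}{3}-\sum_{a\in S}\binom{A_a}{2}$. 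Summed over $q\in C$ this is $O(n^2)$. Everything so far is routine; the substance is in $\{h(q)\}$.

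For $h(q)$ the crucial observation is that, for distinct $p,a,b\in S$, the point $p$ lies in the interior of the triangle $qab$ if and only if the open segment $ab$ meets the ray that starts at $p$ and points away from $q$. Hence, letting $N_p(d)$ denote the number of segments spanned by $S\setminus\{p\}$ crossed by the ray from $p$ in direction $d$, we have $h(q)=\sum_{p\in S}N_p(\rho_{p,q})$, where $\rho_{p,q}$ is the direction from $p$ away from $q$. For a fixed $p$ I would obtain all the values $N_p(\rho_{p,q})$, $q\in C$, by one rotational sweep of the direction $d$ around $p$. A segment $ab$ with $a,b\in S\setminus\{p\}$ is hit by the ray in direction $d$ precisely when $a$ and $b$ lie on opposite sides of the line through $p$ with direction $d$ and $d$ falls in the shorter of the two arcs they span around $p$; consequently $N_p(\cdot)$ is piecewise constant and changes only as $d$ sweeps past one of the $n-1$ points $t\in S\setminus\{p\}$, at which moment it jumps by $2A_t-(n-2)$, where $A_t$ is the number of points of $S\setminus\{p,t\}$ to the left of the ray from $p$ through $t$ — exactly a quantity delivered by the two-pointer sweep of the order around $p$. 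So, after an $O(n)$ initialization of $N_p$ at one generic direction, I sweep once counterclockwise through the $n-1$ update events and the $\Theta(n)$ query directions $\{\rho_{p,q}:q\in C\}$ (these are the order of $C$ around $p$ shifted by $\pi$, hence already sorted), each time a query direction is reached adding the current value of $N_p$ to a running accumulator for $h(q)$. This costs $O(n)$ for each $p$, hence $O(n^2)$ in total, after which $\crs(S\cup\{q\})=\crs(S)+\binom{n}{3}-f(q)-h(q)$ is reported for every $q\in C$.

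The step I expect to be the main obstacle is establishing the update rule for $N_p$ under the rotational sweep: one must determine precisely which segments switch between being crossed by the ray and being crossed by its opposite as $d$ turns through a point $t$, verify that the net effect depends only on the precomputed count $A_t$, and treat the limiting behaviour at $t$ — an elementary but error-prone case analysis in which the general position of $S\cup C$ is used to rule out degenerate directions (in particular, no $\rho_{p,q}$ coincides with a sweep event). The second indispensable ingredient, the $O(n^2)$-time computation of all radial orders, is classical but genuinely needed here: replacing it by per-point sorting would only give $O(n^2\log n)$.
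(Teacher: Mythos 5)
Your proposal is correct, and it takes a genuinely different route from the paper. The paper works entirely through the $\lambda$-matrix: it uses the identity $\crs(S)=f_S(S,S)-n(n-1)(n-2)(n-3)/8$ with $f_S(P,Q)=\sum_{p,q}\binom{\lambda_S(p,q)}{2}$, and for each candidate $p$ it reduces the increment $f_{S\cup\{p\}}(S,S)-f_S(S,S)$ to the sums $\sum_{r\in S_p(q)}\lambda_S(q,r)$ over the halfplanes $S_p(q)$, which a single rotating-line lemma evaluates in linear time per base point; the same machinery then serves Theorems~\ref{thm:move} and~\ref{thm:remove} almost verbatim. You instead use the elementary decomposition $\crs(P)=\binom{|P|}{4}-\nu(P)$ and split the new non-convex quadruples through $q$ according to whether $q$ is the interior point (your $f(q)$, the classical triangle-containment count $\binom{n}{3}-\sum_a\binom{A_a}{2}$) or a triangle vertex (your $h(q)$, reduced via the correct observation that $p\in\operatorname{int}(\triangle qab)$ iff the ray from $p$ away from $q$ stabs the open segment $ab$, and then evaluated by a rotational sweep whose jump $2A_t-(n-2)$ you compute correctly). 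Both arguments are $O(n^2)$ and both rest on the same $O(n^2)$ all-radial-orders subroutine; your ray-stabbing sweep is in effect a self-contained reproof of what the paper's Lemma~\ref{lem:ComputingLemma} supplies (the value $N_p(d)$ is an accumulated sum of per-point weights over an angular interval), so the two proofs are cousins at the level of the final sweep. What your version buys is independence from the $\lambda$-matrix crossing formula and a more transparent geometric meaning for each term; what the paper's version buys is uniformity, since one identity and one lemma dispatch all three update operations with only sign changes. Minor points to tidy in a write-up: state explicitly that the $O(n)$ initialization of $N_p$ at a generic direction is the two-pointer count of angular gaps exceeding $\pi$, and that the query directions $\rho_{p,q}$ avoid all event directions because $S\cup C$ is in general position.
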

Note that in each of these theorems the amortized time per point is linear. 

We implemented\footnote{Our implementations run in $O(n^2 \log n)$ time.} the algorithms implied by Theorems~\ref{thm:move},~\ref{thm:remove} and~\ref{thm:add}.
These implementations are being used in an ongoing project to improve the upper bound
on $\crs(n)$~\cite{project}.  This project has improved the current upper bound on $\crs(n)$; we refrain
from mentioning its value as it is not part of this paper and has not been made public yet.
Our implementations are available at \texttt{www.pydcg.org}. In Section~\ref{sec:preliminaries}, we provide
some preliminary definitions and Lemmas. In Section~\ref{sec:proofs}, we prove
Theorems~\ref{thm:move},~\ref{thm:remove} and~\ref{thm:add}.

\section{Preliminaries}\label{sec:preliminaries}

In this section we prove some lemmas that are used 
to prove Theorems~\ref{thm:move},~\ref{thm:remove} and~\ref{thm:add}.
Afterwards, we recall the concept of the $\lambda$-matrix of a point set; we also provide  a
characterization of the number of crossings of a point set  in terms of its  $\lambda$-matrix.

We frequently need to know, for every point $p \in S$, the counterclockwise order around $p$
of the points in $S \setminus \{p\}$. We have the following lemma.
\begin{lemma}\label{lem:SortPoints}
 The set of counterclockwise orders around $p$ of $S\setminus \{ p\}$, of every point $p \in S$, 
 can be computed in $O(n^2)$ time.
\end{lemma}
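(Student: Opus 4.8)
The plan is to use point--line duality together with the classical fact that the arrangement of $n$ lines in the plane can be constructed in $O(n^2)$ time and space. First I would normalize the input: apply a generic rotation to $S$, chosen so that no two points share an $x$-coordinate (all but finitely many rotations work). Such a rotation changes every counterclockwise order around every point by one fixed angle, so it has no effect on the combinatorial output and can be undone for free; its only purpose is to guarantee that no segment spanned by $S$ is vertical. Write $p_i=(a_i,b_i)$ and let $p_i^{*}$ be the dual line $y=a_i x-b_i$. The key identity is that $p_i^{*}$ and $p_j^{*}$ meet at the point whose $x$-coordinate is $(b_i-b_j)/(a_i-a_j)$, i.e.\ exactly the slope of the segment $p_ip_j$. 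Because $S$ is in general position, for a fixed $i$ the slopes of the segments $p_ip_j$ with $j\neq i$ are pairwise distinct (an equality would force three points of $S$ to be collinear), so along $p_i^{*}$ the $n-1$ crossings with the other dual lines occur at $n-1$ distinct $x$-coordinates.

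Next I would build the arrangement $\mathcal{A}$ of the $n$ dual lines. By the standard incremental construction of line arrangements (see, e.g., any textbook on computational geometry), this takes $O(n^2)$ time and uses $O(n^2)$ space. Traversing the line $p_i^{*}$ inside $\mathcal{A}$ from $x=-\infty$ to $x=+\infty$ takes $O(n)$ time and enumerates the indices $j\neq i$ sorted by increasing $\mathrm{slope}(p_ip_j)$; doing this for every $i$ costs $O(n^2)$ in total. This already gives, for each $p_i$, the order of $S\setminus\{p_i\}$ by direction taken modulo $\pi$.

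Finally I would lift this $\pi$-periodic order to the full circular order. For each $i$, split the slope-sorted list of the $p_j$ into those with $a_j>a_i$ (for which $\vec{p_ip_j}$ points into the right half-plane, so its direction lies in $(-\pi/2,\pi/2)$) and those with $a_j<a_i$ (direction in $(\pi/2,3\pi/2)$), keeping each sublist in its inherited increasing-slope order. Concatenating the first sublist followed by the second yields precisely the counterclockwise order of $S\setminus\{p_i\}$ around $p_i$ (starting from direction $-\pi/2$), since $\arctan$ is increasing and the two half-turns are traversed in order. This is $O(n)$ per point, hence $O(n^2)$ overall, and combined with the arrangement construction this proves the lemma. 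The only substantive ingredient is the $O(n^2)$ arrangement construction; the parts I expect to need the most care are the degeneracy issues---vertical segments, which correspond to parallel (non-crossing) dual lines and are removed by the generic rotation, and parallel segments sharing an endpoint, which are ruled out by general position---and the final split-and-concatenate step that converts the slope order delivered by the arrangement into a genuine $2\pi$ circular order.
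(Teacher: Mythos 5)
Your proposal is correct and follows essentially the same route as the paper: dualize $S$ to $n$ lines, build the arrangement in $O(n^2)$ time with a standard incremental algorithm, and read off the angular order around each point by walking along its dual line. The paper states this in three sentences and leaves the details implicit; you have simply filled in the degeneracy handling and the conversion from the slope (mod $\pi$) order to the full circular order, both of which are handled correctly.
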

\begin{proof}
 Dualize $S$ to a set of $n$ lines.
 The corresponding line arrangement can be constructed in $O(n^2)$ time  with  standard algorithms. 
 The clockwise orders of $S\setminus \{ p\}$ around each $p \in S$ can then be
 extracted from this line arrangement in $O(n^2)$ time.
\end{proof}

Let $p$ be a point not in $S$. For every point $q \in S$, let  $S_p(q)$ be the set of 
 points in $S$ to the left of the directed line from $p$ to $q$.
 
\begin{lemma}\label{lem:ComputingLemma}
 Let $p$ be a point not in $S$. Suppose that each point $r \in S$ has a weight $w(r)$ assigned to it,
 and that the counterclockwise order around $p$ of the points in $S$ is known.
 Then the set of values 
 \[
  \left \{\sum_{r \in S_p(q)}{w(r)}: q\in S \right \}.
 \] can be computed in linear time.
\end{lemma}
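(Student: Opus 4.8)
The plan is to process the points of $S$ in the counterclockwise order around $p$ that we are given, maintaining $S_p(q)$ incrementally by a rotating two-pointer sweep. Relabel the points as $q_0,q_1,\dots,q_{n-1}$ in this counterclockwise order, and read all indices modulo $n$. Writing $\theta_j$ for the direction of the ray from $p$ through $q_j$, the point $q_j$ lies in $S_p(q_i)$ exactly when the counterclockwise angle from $\theta_i$ to $\theta_j$ is strictly less than $\pi$; hence $S_p(q_i)$ is a contiguous block $q_{i+1},q_{i+2},\dots,q_{g(i)-1}$ of the points immediately following $q_i$ in the cyclic order, the block being empty precisely when the angular gap right after $q_i$, as seen from $p$, already exceeds $\pi$. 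Deciding whether a given $q_j$ belongs to this block is a single constant-time orientation test on $p$, $q_i$, $q_j$, so no new sorting is needed beyond the cyclic order we already have.

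The sweep relies on the following monotonicity. For a fixed $j$, the set of directions $\theta$ for which $\theta_j$ lies within $\pi$ counterclockwise of $\theta$ is a half-turn arc, so the set of indices $i$ with $q_j\in S_p(q_i)$ is a single cyclic interval; thus, as $i$ runs over $0,1,\dots,n-1$, every point enters $S_p(\cdot)$ exactly once and leaves exactly once. Concretely, when $i$ advances to $i+1$ the only point that can leave the block is its current front $q_{i+1}$, while the points that newly enter are exactly those swept over by the pointer $g$; in particular $g$ is non-decreasing in the cyclic sense and makes a single full turn around $p$ while $i$ does, so $g$ performs only $O(n)$ advances in total.

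The algorithm then maintains the running weight $W=\sum_{r\in S_p(q_i)}w(r)$. Initialize for $i=0$ by walking counterclockwise from $q_1$, accumulating weights into $W$ and placing $g$ at the first point whose angular distance from $q_0$ reaches $\pi$; this costs $O(n)$. To pass from $i$ to $i+1$: if the block is nonempty subtract $w(q_{i+1})$ from $W$, otherwise advance $g$ to $i+2$; then, while $q_g$ still lies strictly to the left of the directed line $p\to q_{i+1}$ (a test that automatically fails once $g$ returns to $i+1$, so the loop terminates), add $w(q_g)$ to $W$ and increment $g$. Output the current value of $W$ as the answer for $q_i$ at each step. Initialization is linear, the front update is $O(1)$ per step, and the back pointer $g$ advances $O(n)$ times in all, so the computation runs in linear time. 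The only real subtlety — and the step that must be checked with care — is the handling of empty blocks: resetting $g$ correctly and never subtracting a weight that was never added, which is exactly the corner case flagged above; everything else is a direct consequence of the monotonicity observation.
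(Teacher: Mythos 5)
Your proof is correct. The facts it rests on --- that $S_p(q_i)$ is the contiguous cyclic block of points in the open counterclockwise half-turn following $q_i$, that for fixed $j$ the set of indices $i$ with $q_j\in S_p(q_i)$ is a single cyclic interval, and that consequently the pointer $g$ advances only $O(n)$ times in total --- are all stated and justified, and you handle the one genuine corner case (empty blocks, resetting $g$, never subtracting an unadded weight).

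The paper's proof is the same rotational sweep in spirit but organizes the bookkeeping differently: it processes the points in the order in which a rotating directed line through $p$ encounters them, i.e.\ sorted by angle modulo $\pi$ rather than modulo $2\pi$. In that event order, between two consecutive positions only the two points defining those positions change sides of the line, so each update is worst-case constant time and no amortization is needed (at the cost of having to account for the left/right flip when the query point lies behind the current direction of the rotating line, a detail the paper leaves implicit). Your version works directly with the counterclockwise (mod $2\pi$) order that the lemma hands you, and there the amortized two-pointer argument is genuinely necessary: the antipodal boundary of the half-plane can sweep over many points between consecutive queries, so the per-step update is not $O(1)$ in the worst case. Both routes give linear time; yours is more self-contained and makes explicit the monotonicity that the paper's one-line ``constant time per step'' claim glosses over.
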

\begin{proof}
Let $q_1$ be a point in $S$.
 Let $\ell$ be the directed line from $p$ to $q_1$. 
 Rotate $\ell$ counterclockwise around $p$, and let $(q_1,\dots,q_n)$ be the points of $S$ 
 in the order as they are encountered by $\ell$ during this rotation. This order can be computed from the counterclockwise order
 of the points in $S$ around $p$ in $O(n)$ time. Compute $\sum_{r \in S(q_1)}{w(r)}$ in $O(n)$
 time. Since $\sum_{r \in S(q_i)}{w(r)}$ can be computed from $\sum_{r \in S(q_{i-1})}{w(r)}$ in constant
 time, the result follows.
\end{proof}

\subsection*{The $\lambda$-Matrix}
Let $p$ and $q$ be a pair of points not necessarily in $S$.
Let $\lambda_S(p,q)$ be the number of points of $S$ that lie to the left
of the directed line from $p$ to $q$; in the case that $p=q$, we set $\lambda_S(p,q):=0$.
Let $p_1,p_2,\cdots,p_n$ be the points in $S$.
The $\lambda$-matrix of $S$ 
is the matrix whose $(i,j)$-entry is equal to $\lambda_S(p_i,p_j)$.
The following lemma is well known; it can be proven from
Lemma~\ref{lem:ComputingLemma} by assigning a weight equal to one
to every point in $S$.  

\begin{lemma} \label{lem:ComputingLambdaMatrix}
 The $\lambda$-matrix of a set of $n$ points in general position in the plane can be computed in $O(n^2)$ time. 
\end{lemma}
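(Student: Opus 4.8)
The plan is to compute the $\lambda$-matrix one row at a time, each row by a single application of Lemma~\ref{lem:ComputingLemma}. First I would invoke Lemma~\ref{lem:SortPoints} to obtain, in $O(n^2)$ total time, the counterclockwise order of $S\setminus\{p\}$ around $p$ for every $p\in S$. This is precisely the sorted-order input that Lemma~\ref{lem:ComputingLemma} requires.

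Now fix a point $p_i\in S$. Apply Lemma~\ref{lem:ComputingLemma} to the point set $S\setminus\{p_i\}$, taking the external point to be $p_i$ itself (which does not lie in $S\setminus\{p_i\}$, as the lemma demands) and assigning the weight $w(r):=1$ to every point $r$. The lemma then outputs, in $O(n)$ time, the values $\sum_{r}1$ over the points of $S\setminus\{p_i\}$ lying to the left of the directed line from $p_i$ to $q$, for every $q\in S\setminus\{p_i\}$; that is, it outputs $\lambda_{S\setminus\{p_i\}}(p_i,q)$ for all such $q$. Since $p_i$ lies on the directed line from $p_i$ to $q$, it is never counted, so $\lambda_{S\setminus\{p_i\}}(p_i,q)=\lambda_S(p_i,q)$. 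Combined with the convention $\lambda_S(p_i,p_i)=0$, this yields the entire $i$-th row of the $\lambda$-matrix.

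Repeating this for each of the $n$ choices of $p_i$ costs $O(n)$ per row, hence $O(n^2)$ overall; adding the $O(n^2)$ preprocessing from Lemma~\ref{lem:SortPoints}, the whole computation runs in $O(n^2)$ time, as claimed. There is essentially no obstacle in this argument — the only things to check are the bookkeeping that the sorted orders of Lemma~\ref{lem:SortPoints} match the input format of Lemma~\ref{lem:ComputingLemma}, and the elementary observation that deleting $p_i$ from the underlying set leaves the counts $\lambda_S(p_i,q)$ unchanged. This is why the statement is well known and stated without a full proof.
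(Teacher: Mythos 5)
Your proposal is correct and follows exactly the route the paper intends: it sketches the proof of this lemma in one line as ``apply Lemma~\ref{lem:ComputingLemma} with all weights equal to one,'' and your argument is just that sketch carried out carefully, with Lemma~\ref{lem:SortPoints} supplying the required angular orders. The extra bookkeeping you check (that $p_i$ itself is never counted, so deleting it does not change the row) is sound and matches the paper's conventions.
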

\markatright{\QED}

It is known
that the $\lambda$-matrix of $S$ determines $\crs(S)$. This was shown independently by
Lov\'asz, Wagner, Welzl, and Wesztergombi~\cite{k_edges_lovasz}, and by \'Abrego
and Fern\'andez-Merchant~\cite{k_edges_silvia}.
We now provide another characterization of $\crs(S)$ in terms of the $\lambda$-matrix of $S$.
For two any finite sets of points $P$ and $Q$,
define \[
 f_S(P,Q):=\sum_{p\in P}\sum_{q\in Q}\binom{\lambda_{S}(p,q)}{2}.
\]

%
%
%
%
%

\begin{lemma}
\label{lem:FormulaCrossingNumber}
\[\crs(S)=f_S(S,S)-\frac{n(n-1)(n-2)(n-3)}{8}.\]
\end{lemma}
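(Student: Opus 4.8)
The plan is to count crossings by summing, over all pairs of segments, the event that they cross, and then regroup this sum in a way that produces the term $f_S(S,S)$. Recall that two segments $p_ap_b$ and $p_cp_d$ (on four distinct points) cross precisely when one of $c,d$ lies to the left of the line $\overrightarrow{p_ap_b}$ and the other lies to the right; equivalently, among the $\binom{4}{2}=6$ ways of splitting the four points into an ordered pair (tail, head) and the remaining two, exactly a fixed number give a ``separating'' configuration. So the first step is to express $\crs(S)$ as $\frac{1}{c}\sum_{(p,q)} (\text{number of pairs } \{r,r'\}\subseteq S\setminus\{p,q\} \text{ with } r \text{ left of } \overrightarrow{pq} \text{ and } r' \text{ right of it})$ for the appropriate constant $c$, being careful about how many ordered pairs $(p,q)$ among a crossing $4$-tuple contribute.

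The key algebraic identity is the following. For a fixed ordered pair $(p,q)$ of distinct points of $S$, let $\lambda:=\lambda_S(p,q)$ be the number of points of $S$ strictly left of $\overrightarrow{pq}$; then the number of points strictly right is $(n-2)-\lambda$ (using general position and that $p,q$ are not counted). The number of unordered pairs from $S\setminus\{p,q\}$ with one point on each side is $\lambda\big((n-2)-\lambda\big)$, while the number of such pairs with \emph{both} points left is $\binom{\lambda}{2}$. Now observe the combinatorial identity
\[
\binom{\lambda}{2}+\binom{(n-2)-\lambda}{2}+\lambda\big((n-2)-\lambda\big)=\binom{n-2}{2},
\]
which lets me write $\lambda\big((n-2)-\lambda\big)=\binom{n-2}{2}-\binom{\lambda}{2}-\binom{(n-2)-\lambda}{2}$. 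Summing the left-hand side over all ordered pairs $(p,q)$ counts each crossing $4$-tuple a fixed number of times (I will determine this constant explicitly — each such $4$-tuple contributes via exactly those ordered pairs $(p,q)$ whose segment is one of the two ``diagonals'' and in either orientation, giving $4$ ordered pairs per crossing), so $\sum_{(p,q)}\lambda_S(p,q)\big((n-2)-\lambda_S(p,q)\big)=4\,\crs(S)$.

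On the right-hand side, $\sum_{(p,q)}\binom{n-2}{2}=n(n-1)\binom{n-2}{2}$, a closed form. For the remaining two sums, note that $\sum_{(p,q)}\binom{\lambda_S(p,q)}{2}=f_S(S,S)$ by definition (the diagonal $p=q$ terms vanish since $\lambda_S(p,p)=0$), and $\sum_{(p,q)}\binom{(n-2)-\lambda_S(p,q)}{2}$ equals $\sum_{(p,q)}\binom{\lambda_S(q,p)}{2}=f_S(S,S)$ as well, because reversing the orientation of the line swaps the left and right open half-planes, so $\lambda_S(q,p)=(n-2)-\lambda_S(p,q)$ for distinct $p,q$, and the map $(p,q)\mapsto(q,p)$ is a bijection on ordered pairs. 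Hence $4\,\crs(S)=n(n-1)\binom{n-2}{2}-2f_S(S,S)+\text{(correction for the }n\text{ diagonal terms)}$; a quick check shows the diagonal terms are already handled consistently. Solving for $\crs(S)$ and simplifying $\tfrac{1}{4}n(n-1)\binom{n-2}{2}=\tfrac{n(n-1)(n-2)(n-3)}{8}$ — wait, that places $f_S(S,S)$ on the wrong side; rearranging instead gives $2f_S(S,S)=n(n-1)\binom{n-2}{2}-4\,\crs(S)$, and I must recheck the crossing-count constant so that the final relation comes out as $\crs(S)=f_S(S,S)-\frac{n(n-1)(n-2)(n-3)}{8}$. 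The main obstacle is precisely this bookkeeping: getting the multiplicity with which each crossing (and each non-crossing or degenerate configuration) is counted exactly right, and confirming that the $\binom{\lambda}{2}$ and $\binom{(n-2)-\lambda}{2}$ sums are genuinely equal and each equal to $f_S(S,S)$; everything else is substitution into the identity above and collecting the polynomial-in-$n$ terms.
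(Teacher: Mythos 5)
Your overall strategy---expressing the number of separated pairs $\lambda_S(p,q)\bigl((n-2)-\lambda_S(p,q)\bigr)$ through the identity $\binom{\lambda}{2}+\binom{(n-2)-\lambda}{2}+\lambda\bigl((n-2)-\lambda\bigr)=\binom{n-2}{2}$, and using $\lambda_S(q,p)=(n-2)-\lambda_S(p,q)$ to identify both binomial sums with $f_S(S,S)$---is sound and is essentially dual to the paper's argument (the paper counts configurations with both points on the left; you count those with one point on each side). The genuine gap is your multiplicity claim $\sum_{(p,q)}\lambda_S(p,q)\bigl((n-2)-\lambda_S(p,q)\bigr)=4\,\crs(S)$. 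The quantity $\lambda\bigl((n-2)-\lambda\bigr)$ counts pairs $\{r,s\}$ separated by the \emph{line} through $p$ and $q$, which is not the same as the segment $rs$ crossing the segment $pq$: for the segments to cross you also need $p$ and $q$ to be separated by the line through $r$ and $s$. Concretely, take a $4$-point subset consisting of a triangle $a,b,c$ with $d$ in its interior; for each of the three lines $\overrightarrow{ad},\overrightarrow{bd},\overrightarrow{cd}$ (in either orientation) the remaining two points are separated, so this non-crossing quadruple contributes $6$ to your sum rather than $0$, while a convex quadruple contributes $4$ (its two diagonals, each in two orientations). Hence the correct identity is $\sum_{(p,q)}\lambda_S(p,q)\bigl((n-2)-\lambda_S(p,q)\bigr)=4\Square(S)+6\triangle(S)$, where $\Square(S)$ and $\triangle(S)$ denote the numbers of convex and non-convex $4$-subsets. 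This is exactly why your bookkeeping places $f_S(S,S)$ on the wrong side, as you noticed but did not resolve: no single constant $c$ makes that sum equal to $c\cdot\crs(S)$.

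The repair is short and brings you back to the paper's mechanism. Substitute $\triangle(S)=\binom{n}{4}-\Square(S)$ to get $4\Square(S)+6\triangle(S)=6\binom{n}{4}-2\crs(S)$; equate this with $n(n-1)\binom{n-2}{2}-2f_S(S,S)=12\binom{n}{4}-2f_S(S,S)$; and solve to obtain $\crs(S)=f_S(S,S)-3\binom{n}{4}=f_S(S,S)-\frac{n(n-1)(n-2)(n-3)}{8}$. The paper does the analogous thing on the ``both left'' side: it counts $4$ such patterns per convex quadruple and $3$ per non-convex quadruple, and eliminates $\triangle(S)$ using the total pattern count. In either version, the step that cannot be skipped is distinguishing the convex and non-convex per-quadruple multiplicities and then eliminating $\triangle(S)$ via $\Square(S)+\triangle(S)=\binom{n}{4}$.
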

\begin{proof}

Let $p,q,r$ and $s$ be four different points of $S$. 
We call the tuple $((p,q),\lbrace r,s\rbrace)$ a \emph{pattern}. 
If the points $r$ and $s$ are both to the left of the directed line from $p$ to $q$,
we say that $((p,q),\lbrace r,s\rbrace)$ is a \emph{type-A} pattern, otherwise
we call it a \emph{type-B} pattern.
We denote by $A(S)$ and $B(S)$ the number of type-A and type-B patterns in $S$, respectively.

Let $P$ be a set of four points.
If $P$ is in convex position, then $P$ determines $4$ type-A patterns and $8$ 
 type-B patterns.
If $P$ is not in convex position, then $P$ determines $3$ type-A patterns and $9$ 
type-B patterns. Let $\Square(S)$ denote the number of subsets of $S$ of four points in convex position,
and let $\triangle(S)$ denote the number of subsets of $S$ of four points not in convex position.
Thus,
\begin{IEEEeqnarray}{rCl}
A(S) & = & 4\Square(S)+3\triangle(S) \textrm{ and } \nonumber \\
 B(S) & = & 8\Square(S)+9\triangle(S). \nonumber
 \end{IEEEeqnarray}

Note that $A(S)+B(S)=n(n-1)(n-2)(n-3)/2$, $\crs(S)=\Square(S)$ and \[A(S)=\sum_{p,q \in S } \binom{\lambda_S(p,q)}{2}.\] 
Therefore,
\begin{IEEEeqnarray}{rCl}
 \crs(S)&=&A(S)-(A(S)+B(S))/4 \nonumber \\
           &=&f_S(S,S)-\frac{n(n-1)(n-2)(n-3)}{8}. \nonumber
\end{IEEEeqnarray}
\end{proof}

\section{Proofs of Theorems \ref{thm:move}, \ref{thm:remove} and \ref{thm:add}}\label{sec:proofs}

\subsection*{Proof of Theorem~\ref{thm:remove}}

 For every point $p \in S$, compute the clockwise order of $S\setminus \{ p\}$ around $p$; afterwards, 
 compute the $\lambda$-matrix of $S$. By Lemmas~\ref{lem:SortPoints} and ~\ref{lem:ComputingLambdaMatrix} this
 can be done in $O(n^2)$ time. Using the $\lambda$-matrix of $S$ compute 
 $f_S(S,S)$, $\{f_S(\{p\},S):p \in S\}$ and $\{f_S(S,\{p\}):p \in S\}$ in $O(n^2)$ time.
 
 Note that by Lemma~\ref{lem:FormulaCrossingNumber}, for every $p \in S$ we have that 
 \[\crs(S\setminus \{ p\})=f_{S\setminus \{ p\} }(S\setminus \{ p\},S\setminus \{ p\})-\frac{(n-1)(n-2)(n-3)(n-4)}{8}.\]
 Thus, it is enough to compute $\{f_{S\setminus \{ p\} }(S\setminus \{ p\},S\setminus \{ p\}): p \in S\}$ in $O(n^2)$ time,
 For every $p \in S$, let \[\nabla_p:=f_{S\setminus \{ p\}}(S\setminus \{ p\},S\setminus \{ p\})-f_S(S,S)+f_S(\{p\},S)+f_S(S,\{ p\}).\]
 To compute $\{f_{S\setminus \{ p\} }(S\setminus \{ p\},S\setminus \{ p\}): p \in S\}$ in $O(n^2)$ time, we compute
 $\{\nabla_p : p \in S\}$ in $O(n^2)$ time.
 Note that \[\nabla_p=\sum_{q \in S\setminus \{ p\}} \sum_{r \in S\setminus \{ p\}} \left( \binom{\lambda_{S\setminus \{ p\}}(q,r)}{2}-\binom{\lambda_S(q,r)}{2} \right ).\]
 
 Let $r$ be a point in $S\setminus \{ p\}$. Note that if
 $p$ is to the right of the directed line from $q$ to $r$  then \[\binom{\lambda_{S\setminus \{ p\}}(q,r)}{2}-\binom{\lambda_S(q,r)}{2}=0;\] and 
 if $p$ is to the left of the directed line from $q$ to $r$ then
 \[\binom{\lambda_{S\setminus \{ p\}}(q,r)}{2}-\binom{\lambda_S(q,r)}{2}=1-\lambda_{S}(q,r).\]  Moreover,
 $p$ is to the left of the directed line from $q$ to $r$ if and only if 
 $r$ is to the left of the directed line from $p$ to $q$.
 
 For every point $q \in S$ do the following.  
 To every point $r \in S \setminus \{q\}$ 
 assign the weight $w_q(r):=1-\lambda_{S}(q,r)$. Thus,
 \[ \sum_{r \in S\setminus \{ p\}} \left( \binom{\lambda_{S\setminus \{ p\}}(q,r)}{2}-\binom{\lambda_S(q,r)}{2} \right ) = \sum_{r \in S_p(q)} w_q(r).\]
 By Lemma~\ref{lem:ComputingLemma}, for a fixed $q \in S$, the set of values
 \[\left \{ \sum_{r \in S_p(q)} w_q(r):p \in S\setminus \{ q\} \right \}\] can be computed in linear time.
 This implies that the set 
 \[\left \{\left \{ \sum_{r \in S_p(q)} w_q(r): p \in S\setminus \{q\} \right \} : q \in S \right \}\] can be computed in $O(n^2)$ time.
 Therefore,
 \[\left \{\nabla_p : p \in C\right \} \] can be computed in $O(n^2)$ time; the result follows.
 
\subsection*{Proof of Theorem~\ref{thm:add}}

Let $P:=S \cup C$.
For every point $p \in P$, compute the clockwise order of $P \setminus \{p\}$ around $p$;
by Lemma~\ref{lem:SortPoints}, this can be done in $O(n^2)$ time. To every point $p \in P$ assign
a weight of $w(p)=1$ if $p$ is in $S$, and a weight of $w(p)=0$ if $p$ is in $C$.
Use Lemma~\ref{lem:ComputingLemma} to compute the set of values
\[\left \{\sum_{r \in P_p(q)} w(r):p,q \in P \right \}\] in $O(n^2)$ time.
Note that for every pair of points $p,q \in P$ we have that
\[\lambda_S (p,q)=\sum_{r \in P_p(q)} w(r).\] Therefore, 
$f_S(S,S)$, $\{f_S(\{p\},S \cup \{p\}):p \in S\}$ and $\{f_S(S \cup\{ p \},\{p\}):p \in S\}$
can be computed in $O(n^2)$ time.

Note that by Lemma~\ref{lem:FormulaCrossingNumber}, for every $p \in C$ we have that 
 \[\crs(S \cup \{ p\})=f_{S\cup \{ p\} }(S\cup \{ p\},S \cup \{ p\})-\frac{n(n+1)(n-1)(n-2)}{8}.\]
 Thus, it is enough to compute $\{f_{S\cup \{ p\} }(S\cup \{ p\},S\cup \{ p\}): p \in C\}$ in $O(n^2)$ time.
For every $p \in C$, let \[\nabla_p:=f_{S\cup \{ p\}}(S\cup \{ p\},S\cup \{ p\})-f_S(S,S)-f_S(\{p\},S)-f_S(S,\{ p\}).\]
 To compute $\{f_{S\cup \{ p\} }(S\cup \{ p\},S\cup \{ p\}): p \in C\}$ in $O(n^2)$ time, we compute
 $\{\nabla_p : p \in S\}$ in $O(n^2)$ time.
Note that 
\begin{IEEEeqnarray}{rCl}
 \nabla_p & = & f_{S\cup \{ p\}}(S\cup \{ p\},S\cup \{ p\})-f_S(S,S)-f_S(\{p\},S)-f_S(S,\{ p\}) \nonumber \\
          & = &  f_{S\cup \{ p\}}(S\cup \{ p\},S\cup \{ p\})-f_S(S,S)-f_{S \cup \{p\}}(\{p\},S)-f_{S \cup \{p \}}(S,\{ p\}) \nonumber \\
          & = & f_{S\cup \{ p\}}(S,S)-f_S(S,S) \nonumber \\
          & = & \sum_{q\in S} \sum_{r \in S}\left ( \binom{\lambda_{S \cup \{p\}}(q,r)}{2}-\binom{\lambda_{S}(q,r)}{2}\right ). \nonumber
\end{IEEEeqnarray}

Let $r$ be a point in $S$. Note that if
 $p$ is to the right of the directed line from $q$ to $r$ then \[\binom{\lambda_{S\cup \{ p\}}(q,r)}{2}-\binom{\lambda_S(q,r)}{2}=0;\] 
 and if $p$ is to the left of the directed line from $q$ to $r$ then
 \[\binom{\lambda_{S\cup \{ p\}}(q,r)}{2}-\binom{\lambda_S(q,r)}{2}=\lambda_{S}(q,r).\]  Moreover,
 $p$ is to the left of the directed line from $q$ to $r$ if and only if 
 $r$ is to the left of the directed line from $p$ to $q$.
 
 For every point $q \in S$ do the following.
 To every point $r \in S$ 
 assign the weight $w_q(r):=\lambda_{S}(q,r)$. Thus,
 \[ \sum_{r \in S\cup \{ p\}} \left( \binom{\lambda_{S\cup \{ p\}}(q,r)}{2}-\binom{\lambda_S(q,r)}{2} \right ) = \sum_{r \in S_p(q)} w_q(r).\]
  By Lemma~\ref{lem:ComputingLemma}, for a fixed $q$, the set of values
 \[\left \{ \sum_{r \in S_p(q)} w_q(r): p \in C \right \}\] can be computed in linear time.
 This implies that the set 
 \[\left \{\left \{ \sum_{r \in S_p(q)} w_q(r): p \in C \right \} : q \in S \right \}\] can be computed in $O(n^2)$ time.
 Therefore,
 \[\left \{\nabla_p : p \in C\right \} \] can be computed in $O(n^2)$ time; the result follows.

\subsection*{Proof of Theorem~\ref{thm:move}}

 For this it is enough to apply Theorem~\ref{thm:add}
 with $S\setminus\{ p\}$ as the starting set of points, 
 and $C$ as the set of possible new points. 
%

\bibliographystyle{alpha} \bibliography{crossingbib}

\end{document}